\newcommand{\bq}{\begin{quotation}}
\newcommand{\eq}{\end{quotation}}
\newcommand{\be}{\begin{equation}}
\newcommand{\ee}{\end{equation}}
\newcommand{\bea}{\begin{eqnarray}}
\newcommand{\eea}{\end{eqnarray}}
\def\tr{{\rm tr}\,}
\newtheorem{theorem}{Theorem}
\newtheorem{lemma}{Lemma}
\DeclarePairedDelimiter\bra{\langle}{\rvert}
\DeclarePairedDelimiter\bbra{\langle\!\langle}{\|}
\DeclarePairedDelimiter\ket{\lvert}{\rangle}
\DeclarePairedDelimiter\kket{\|}{\rangle\!\rangle}
\DeclarePairedDelimiterX\braket[2]{\langle}{\rangle}{#1 \delimsize\vert #2}
\DeclarePairedDelimiterX\bbraket[2]{\langle\!\langle}{\rangle\!\rangle}{#1 \delimsize\| #2}
\DeclarePairedDelimiterX\cbraket[2]{(\!(}{)\!)}{#1 \delimsize\| #2}
\DeclarePairedDelimiterX\ketbra[2]{\lvert}{\rvert}{#1 \delimsize\rangle\!\langle #2}
\DeclarePairedDelimiterX\kketbra[2]{\|}{\|}{#1 \delimsize\rangle\!\rangle\!\langle\!\langle #2}
\DeclarePairedDelimiterX\cketbra[2]{\|}{\|}{#1 \delimsize)\!)\!(\!( #2}
\DeclarePairedDelimiterX\inner[2]{\langle}{\rangle}{#1,#2}
\DeclarePairedDelimiter\abs{\lvert}{\rvert}
\newcommand{\norm}[1]{\left\lVert#1\right\rVert}
\newcommand{\arxiv}[2][]{\ifthenelse{\isempty{#1}}{\href{http://arxiv.org/abs/#2}{{\tt arXiv:\allowbreak{}#2}}} {\href{http://arxiv.org/abs/#2}{{\tt arXiv:\allowbreak{}#2 [#1]}}}}
\newcommand{\booktitle}{\textsl}
\begin{document}
\title{Symmetric Informationally Complete Measurements Identify the Irreducible Difference between Classical and Quantum Systems}
\author{John B. DeBrota$^\dag$}
\author{Christopher A. Fuchs$^\dag$}
\author{Blake C. Stacey}
\affiliation{\medskip Department of Physics, University of Massachusetts Boston, 100 Morrissey Boulevard, Boston MA 02125, USA \medskip \\ $^\dag$Stellenbosch Institute for Advanced Study (STIAS), Wallenberg Research Center at Stellenbosch University, Marais Street, Stellenbosch 7600, South Africa\medskip}
\date{23 January 2020}
\begin{abstract}
We describe a general procedure for associating a minimal informationally complete quantum measurement (or MIC) with a purely probabilistic representation of the Born Rule. Such representations provide a way to understand the Born Rule as a consistency condition between probabilities assigned to the outcomes of one experiment in terms of the probabilities assigned to the outcomes of other experiments. In this
    setting, the difference between quantum and classical physics is the way their physical assumptions augment bare probability theory: Classical physics corresponds to a trivial augmentation---one just applies the Law of Total Probability (LTP) between the scenarios---while quantum theory makes use of the Born Rule expressed in one or another of the forms of our general procedure. To mark the \textit{irreducible} difference between quantum and classical, one should seek the representations that minimize the disparity between the expressions.  We prove that the representation of the Born Rule obtained from a \textit{symmetric} informationally complete measurement (or SIC) minimizes this distinction in at least two senses---the first to do with unitarily invariant distance measures between the rules, and the second to do with available volume in a reference probability simplex (roughly speaking a new kind of uncertainty principle).  Both of these arise from a useful result in majorization theory.  This work complements recent studies in quantum computation where the deviation of the Born Rule from the LTP is measured in terms of negativity of Wigner functions.
\end{abstract}

\maketitle

Quantum information theory represents a change of perspective. Rather than regarding quantum physics as a limitation on our abilities---the typical sentiment of older texts---we have learned that it can augment them. In frustrating some ambitions, it enables more subtle ones. Deviation from classicality is a \emph{resource,} and the idea that this resource can be quantified as a modification of the classical probability calculus dates to the beginning of the field~\cite{Feynman:1981}. More recent inquiries have developed this notion precisely:\ The ``negativity'' in a Wigner-func\-tion representation of quantum states is now understood to be valuable in its own right~\cite{Veitch:2014, Howard:2014, Delfosse:2015, Pashayan:2015, Stacey:2016, Zhu:2016, DeBrota:2017, Howard:2017, Raussendorf:2017, Kocia:2017}.  But what does this line of thinking say about quantum mechanics itself?  Can one, following the lead of Carnot, take what might seem a statement of ``mere'' engineering and find a physical principle?  In this paper, we prove some strong results in this regard in the context of finite dimensional Hilbert spaces.  In particular, we find the unique form of the quantum mechanical Born Rule that makes it resemble the classical Law of Total Probability (LTP) as closely as possible in at least two senses.  Both come from a significant majorization result which may be of general interest for resource theory.  This way of tackling the distinction between quantum and classical arises naturally in the quantum interpretive project of QBism~\cite{Fuchs:2013, Fuchs:2016}, where the Born Rule is seen as an empirically motivated constraint that one adds to probability theory when using it in the context of alternative (complementary) quantum experiments.  We expect the techniques developed here to give an alternative way to explore the paradigm of negativity and to be of use for a range of practical problems.

The standard procedure in quantum theory for generating probabilities starts with an observer, or agent, assigning a quantum state $\rho$ to a system. When the agent plans to measure the system, she represents the outcomes of her measurement with a positive operator-valued measure (POVM) $\{D_j\}$.  Assigning $\rho$ implies that she assigns the Born Rule probabilities $Q(D_j)=\tr \rho D_j$ for the outcomes of her measurement. In this way, any quantum state $\rho$ may be regarded as a compilation of probability distributions for all possible measurements. However, one does not have to consider all possible measurements to completely specify $\rho$.  In fact, there exist measurements which are informationally complete (IC) in the sense that $\rho$ is uniquely specified by the agent's expectations for the outcomes of that single measurement~\cite{Prugovecki:1977}. With respect to an IC measurement, any quantum state, pure or mixed, is equivalent to a single probability distribution. In this paper, we consider minimal informationally-complete POVMs (MICs) for finite dimensional quantum systems.  These sets of operators form bases for the vector space of Hermitian operators and lead to probability distributions with the fewest number of entries necessary for reconstructing the quantum state. MICs furnish a convenient way to bypass the language of quantum states, making quantum theory analogous to classical stochastic process theory, in which one puts probabilities in and gets probabilities out.

One can eliminate the need to use the operators $\rho$ and $D_j$ in the Born Rule by reexpressing it as a relation between an agent's expectations for different experiments. Suppose our agent has a preferred reference process consisting of a measurement to which she ascribes the MIC $\{H_i\}$, and, upon obtaining outcome $i$, the preparation of a state $\sigma_i$, drawn from a linearly independent set of post-measurement states $\{\sigma_i\}$. (See Fig.\ \ref{figure1}.) In her choice of this reference process, she requires
linearly independent post-measurement states so that the inner products $\tr D_j\sigma_i$ will uniquely characterize the operators $D_j$. Let $P(H_i)$ be her probabilities for the measurement $\{H_i\}$ and $P(D_j|H_i)$ be her conditional probabilities for a subsequent measurement of $\{D_j\}$. What consistency requirement among $Q(D_j)$, $P(H_i)$, and $P(D_j|H_i)$ does quantum physics entail?

Using the fact that $\{\sigma_i\}$ is a basis, we may write
\begin{equation}
    \rho=\sum_j\alpha_j\sigma_j\;,
\end{equation}
for some set of real coefficients $\alpha_j$.  The probability of outcome $H_i$ is then
\begin{equation}
    P(H_i)=\sum_j\alpha_j\,\tr H_i\sigma_j=\sum_j\,\big[\Phi^{-1}\big]_{ij}\,\alpha_j\;,
\end{equation}
where we have defined the matrix $\Phi$ via its inverse,
\begin{equation}\label{phiinv}
    \big[\Phi^{-1}\big]_{ij}:=\tr H_i\sigma_j=h_i\tr\rho_i\sigma_j\;,
\end{equation}
for $\rho_i:=H_i/h_i$ and $h_i:=\tr H_i$. The invertibility of $\Phi$ is assured by the linear independence of the MIC and post-measurement sets. This implies that the coefficients of $\rho$ in the $\sigma_i$ basis may be written as an application of the $\Phi$ matrix on the vector of probabilities,
\begin{equation}\label{rhoinbasis}
    \rho=\sum_i\left[\sum_k[\Phi]_{ik}P(H_k)\right]\! \sigma_i\;.
\end{equation}
The probability of $D_j$ is given by another application of the Born Rule, which becomes
\begin{equation}\label{ltpanalogindices}
                Q(D_j)=\sum_{i=1}^{d^2}\left[\sum_{k=1}^{d^2}[\Phi]_{ik}P(H_k)\right]\! P(D_j|H_i)\;,
\end{equation}
where $P(D_j|H_i)=\tr D_j\sigma_i$ is the probability for outcome $D_j$ conditioned on obtaining $H_i$ in the reference measurement.
In more compact matrix notation, we can write
\begin{equation}\label{ltpanalog}
    Q(D)=P(D|H)\,\Phi\, P(H)\;,
\end{equation}
where $P(D|H)$ is a matrix of conditional probabilities.
\begin{figure}
\includegraphics[width=\linewidth]{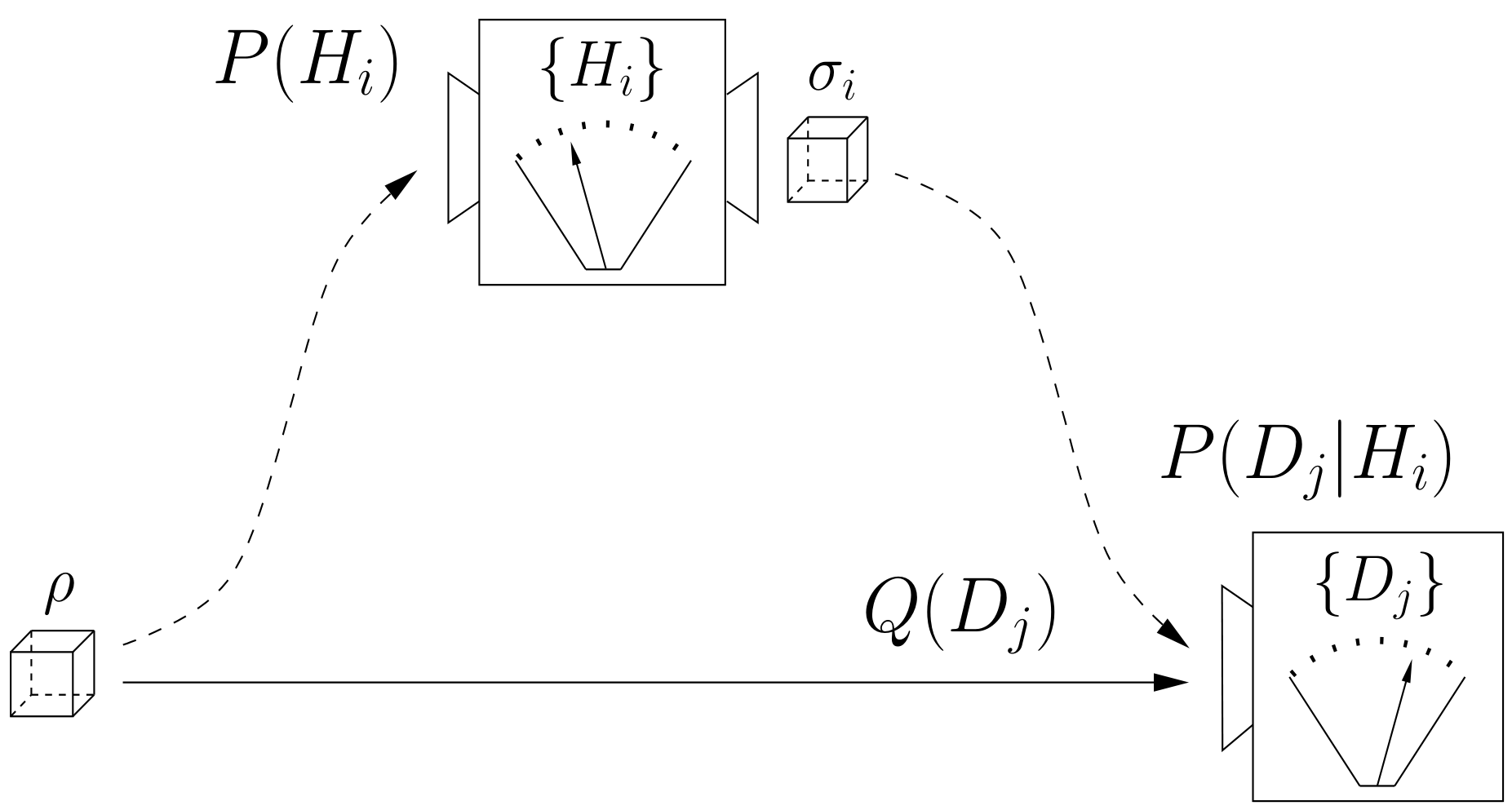}
    \caption{The solid and dashed lines represent two hypothetical procedures an agent contemplates for a system assigned state $\rho$. The solid line represents making a direct measurement of a POVM $\{D_j\}$. The dotted line represents making the MIC measurement $\{H_i\}$ first, preparing a post-measurement state $\sigma_i$, and then finally making the $\{D_j\}$ measurement. For the solid path, the agent assigns one set of probabilities $Q(D_j)$. For the dotted path, she assigns two sets of
    probabilities: $P(H_i)$ and $P(D_j|H_i)$. Unadorned by physical assumptions, probability theory does not suggest a relation between these paths. The Born Rule in the form of Eq.~\eqref{ltpanalog} is such a relation.}
\label{figure1}
\end{figure}

A SIC~\cite{Zauner:1999, Renes:2004, Scott:2010, Appleby:2014, Appleby:2016, Appleby:2017, Scott:2017, Grassl:2017, Fuchs:2017a, Stacey:2017} is a MIC for which all the $H_i$ are rank-1 and
\begin{equation}
  \tr H_i H_j = \frac{1}{d^2} \frac{d\delta_{ij} + 1}{d+1}\;.
\end{equation}
SICs have yet to be proven to exist in all finite dimensions $d$, but they are widely believed to \cite{Fuchs:2017a} and have even been experimentally demonstrated in some low dimensions~\cite{Durt:2008,Medendorp:2011,Bent:2015,Hou:2018}.  The \emph{SIC projectors\/} associated with a SIC are the pure states $\rho_i=dH_i$. In dimension 2, a SIC can be represented as a regular tetrahedron inscribed in the Bloch sphere.  (States defining a qubit SIC can be extracted from Feynman's
1987 essay ``Negative probabilities''~\cite{Feynman:1987}.) In higher dimensions, they are, of course, harder to visualize. When there is no chance of confusion, we will refer to the set of projectors as SICs as well. Prior work has given special attention to the reference
procedure where the measurement and post-measurement states are the same SIC~\cite{Fuchs:2013,Fuchs:2017b,Appleby:2017b}.  In this case we denote $\Phi$ by $\Phi_{\rm SIC}$ and Eq.~\eqref{ltpanalogindices} takes the particularly simple form
\begin{equation}
    Q(D_j)=\sum_{i=1}^{d^2}\left[(d+1)P(H_i)-\frac{1}{d}\right]\! P(D_j|H_i)\;.
    \label{urgleichung}
\end{equation}

Recall that the LTP expresses the simple consistency relation between the probabilities one assigns to the second of a sequence of measurements, the probabilities one assigns to the first, and the conditional probabilities for the second given the outcome of the first. Written in vector notation, this is
\begin{equation}\label{ltp}
    P(D)=P(D|H)P(H)\;.
\end{equation}
We write $P(D)$ as opposed to $Q(D)$ to indicate that it is the probability vector for the \textit{second} of two measurements. $Q(D)$, on the other hand, is the vector of probabilities associated with a single measurement. Aside from the presence of $\Phi$ matrix, Eq.~\eqref{ltpanalog} is functionally equivalent to the LTP.

Although $P(H)$, $P(D|H)$, and $Q(D)$ are probabilities, $\Phi P(H)$
often is not. One may see by summing both sides of Eq.~\eqref{ltpanalogindices} over $j$ that the vector is normalized, but
in general it may contain negative numbers and values greater than
$1$. Such a vector is known as a quasiprobability, and matrices like
$\Phi$---real-valued matrices with columns summing to $1$---which take probabilities to quasiprobabilites are called
column-quasistochastic matrices~\cite{VanDeWetering:2018}.  The subset of column-quasistochastic matrices with nonnegative entries are the column-stochastic matrices. The inverse of a
column-stochastic matrix is generally a column-quasistochastic matrix; in
our case, inspection of Eq.~\eqref{phiinv} reveals that
$\Phi^{-1}$ is column-stochastic.

What would it mean if $\Phi$ could equal $I$? In this case we would have $Q(D)=P(D)$.
Then, conceptually, it wouldn't matter if the intermediate measurement were performed or not. Put another way, we could behave as though measurements simply revealed a preexisting property of the system, as in classical physics where measurements provide information about a system's coordinates in phase space.

Some amount of what makes quantum theory nonclassical resides in the fact that $\Phi$ cannot equal $I$. How close, then, can we make $\Phi$ to $I$ by wisely choosing our MIC and post-measurement states? It turns out that $\Phi_{\rm SIC}$ is closest to the identity with respect to the distance measure induced by any member of a large family of operator norms called unitarily invariant norms~(see section 3.5 in \cite{Horn:1994}).  A unitarily invariant norm is one such that $\norm{A}=\norm{UAV}$
for all unitary matrices $U$ and $V$. These norms include the Schatten $p$-norms (among which are the trace norm, the Frobenius norm, and the operator norm when $p=1,2,$ and $\infty$ respectively) and the Ky Fan $k$-norms. This result codifies the intuition that Eq.~\eqref{urgleichung} represents the ``simplest modification one can imagine to the LTP'' \cite[p.\ 1971]{Fuchs:2014}.

To prove this, we will make use of the theory of majorization \cite{Horn:1994,Marshall:2011}. Suppose $x$ and $y$ are vectors of $N$ real numbers and that $x^\downarrow$ and $y^\downarrow$ are $x$ and $y$ sorted in nonincreasing order. Then we say that $x$ weakly majorizes $y$ from below, denoted $x\succ_w y$, if
\begin{equation}\label{MajBaker}
    \sum_{i=1}^k x_i^\downarrow\geq\sum_{i=1}^k y_i^\downarrow\;,\quad \text{for } k=1,\ldots,N\;.
\end{equation}
If the last inequality is an equality, we say $x$ majorizes $y$, denoted $x\succ y$.

Another variant of majorization, called log majorization or multiplicative majorization, is also studied \cite{Marshall:2011}. We say that $x$ weakly log majorizes $y$ from below, denoted $x\succ_{w\log}y$, if
\begin{equation}\label{logmaj}
    \prod_{i=1}^k x_i^\downarrow\geq\prod_{i=1}^ky_i^\downarrow\;, \quad \text{for } k=1,\ldots,N\;.
\end{equation}
If the last inequality is an equality, we say $x$ log majorizes $y$, denoted $x\succ_{\rm log} y$.
Taking the log of both sides of Eq.~\eqref{logmaj} demonstrates that log majorization is majorization between the vectors after an element-wise application of the log map. Log majorization is strictly stronger than regular majorization; $x\succ_{w\log}y\implies x\succ_wy$, but the reverse implication is not true. Majorization is a partial order on vectors of real numbers sorted in nonincreasing order.

Throughout this paper we will make use of the standard inequalities between the arithmetic, geometric, and harmonic means for vectors of $n$ positive numbers $x_i$:
\begin{equation}\label{meanordering}
    \frac{1}{n}\sum_{i=1}^nx_i\ge\left(\prod_{i=1}^nx_i\right)^{\!1/n}\ge\left(\frac{1}{n}\sum_{i=1}^n\frac{1}{x_i}\right)^{\!-1}\;.
\end{equation}
with equality in all cases if and only if $x_i=c$ for all $i$.
We now turn to two lemmas.
\begin{lemma}\label{det}
    Let $\Phi_{\rm p}$ denote the column-quasistochastic matrix associated with a MIC and a proportional post-measurement set. Then $\det\Phi_{\rm p}\geq\det\Phi_{\rm SIC}$ with equality iff the MIC is a SIC.
\end{lemma}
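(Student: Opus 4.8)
The plan is to work with the inverse matrix, reduce its determinant to that of a symmetric positive-definite Gram matrix, and then note that the completeness relation $\sum_i H_i = I$ forces one eigenvalue of that Gram matrix to equal $1$, after which a single application of the arithmetic--geometric mean inequality \eqref{meanordering} finishes the job.

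First I would make $\Phi_{\rm p}^{-1}$ explicit. For a proportional post-measurement set the states obey $\sigma_i\propto H_i$, and since $\tr\sigma_i=1$ this means $\sigma_i=H_i/h_i=\rho_i$; so by Eq.~\eqref{phiinv}, $[\Phi_{\rm p}^{-1}]_{ij}=h_i\tr\rho_i\rho_j$, i.e.\ $\Phi_{\rm p}^{-1}=D_h G$ with $D_h=\mathrm{diag}(h_1,\ldots,h_{d^2})$ and $G_{ij}=\tr\rho_i\rho_j$ the Gram matrix of the normalized MIC elements. Since $D_h G$ is similar to the symmetric matrix $M:=D_h^{1/2}GD_h^{1/2}$, with $M_{ij}=\sqrt{h_i h_j}\,\tr\rho_i\rho_j$, the two share a determinant; moreover $M$ is positive definite because the MIC is a basis, so its eigenvalues $\mu_1,\ldots,\mu_{d^2}$ are strictly positive and $\det\Phi_{\rm p}^{-1}=\prod_a\mu_a$. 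As $\det\Phi_{\rm p}=1/\det\Phi_{\rm p}^{-1}$, the assertion is equivalent to $\prod_a\mu_a\le\det\Phi_{\rm SIC}^{-1}$; a short computation with the SIC overlaps shows that for a SIC the spectrum of $M$ is $\{1,(d+1)^{-1},\ldots,(d+1)^{-1}\}$, so $\det\Phi_{\rm SIC}^{-1}=(d+1)^{-(d^2-1)}$ and $\det\Phi_{\rm SIC}=(d+1)^{d^2-1}$.

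The crux is the observation that the vector $v$ with $v_i=\sqrt{h_i}$ satisfies $Mv=v$: indeed $(Mv)_i=\sqrt{h_i}\sum_j h_j\tr\rho_i\rho_j=\sqrt{h_i}\,\tr(\rho_i\sum_j h_j\rho_j)=\sqrt{h_i}\,\tr\rho_i=\sqrt{h_i}$, using $\sum_j h_j\rho_j=\sum_j H_j=I$. Hence one eigenvalue of $M$ is always $1$; writing the rest as $\mu_2,\ldots,\mu_{d^2}$, we get $\det\Phi_{\rm p}^{-1}=\prod_{a=2}^{d^2}\mu_a$ and $\sum_{a=2}^{d^2}\mu_a=\tr M-1=\sum_i h_i\tr\rho_i^2-1$. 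Applying \eqref{meanordering} to these $d^2-1$ positive numbers gives $\prod_{a=2}^{d^2}\mu_a\le\big[(\sum_i h_i\tr\rho_i^2-1)/(d^2-1)\big]^{d^2-1}$, with equality iff they all coincide. Finally, each $\rho_i$ is a density matrix, so $\tr\rho_i^2\le 1$ with equality iff $\rho_i$ is rank one, while $\sum_i h_i=\tr\sum_i H_i=d$; therefore $\sum_i h_i\tr\rho_i^2\le d$ and $\det\Phi_{\rm p}^{-1}\le[(d-1)/(d^2-1)]^{d^2-1}=(d+1)^{-(d^2-1)}=\det\Phi_{\rm SIC}^{-1}$. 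For the equality case I would run the conditions backward: equality forces all $\rho_i$ rank one and $\mu_2=\cdots=\mu_{d^2}=(d+1)^{-1}$, hence $M=(d+1)^{-1}(I+vv^{T})$, i.e.\ $\sqrt{h_i h_j}\,\tr\rho_i\rho_j=(d+1)^{-1}(\delta_{ij}+\sqrt{h_i h_j})$; the diagonal entries with $\tr\rho_i^2=1$ give $h_i=1/d$ for all $i$, and then $\tr\rho_i\rho_j=(d+1)^{-1}$ for $i\ne j$ --- exactly the SIC conditions.

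The step I expect to be the real obstacle is not any single calculation but spotting that one must peel off the guaranteed eigenvalue $1$ before invoking AM--GM: applying the mean inequality to all $d^2$ eigenvalues at once yields only the weaker bound $\det\Phi_{\rm p}^{-1}\le d^{-d^2}$, which fails to single out SICs among rank-one MICs (for instance $d^{-d^2}=1/16>1/27=(d+1)^{-(d^2-1)}$ at $d=2$). Everything else is bookkeeping with the completeness relation and the purity bound $\tr\rho_i^2\le 1$.
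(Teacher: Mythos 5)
Your proposal is correct and follows essentially the same route as the paper: symmetrize $\Phi_{\rm p}^{-1}$ into a positive-definite Gram-type matrix, peel off the guaranteed unit eigenvalue, bound the trace of the remainder by $d-1$ via $\sum_i h_i = d$ and $\tr\rho_i^2 \le 1$, and apply a mean inequality (your AM--GM on the eigenvalues of $M$ is the paper's GM--HM on the eigenvalues of $\Phi_{\rm p}$ in disguise). Your handling of the equality case is a modest streamlining --- reconstructing $M=\tfrac{1}{d+1}(I+vv^{T})$ directly from the spectrum and the explicit eigenvector $v_i=\sqrt{h_i}$, rather than the paper's computation with the non-symmetric eigendecomposition $P^{-1}DP$ --- but it reaches the same Gram-matrix identity and the same conclusion.
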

\begin{proof}
We may write $\Phi^{-1}_{\rm p}=GA^{-1}$ where $G_{ij}:=\tr H_iH_j$ is the Gram matrix of the MIC elements and $A_{ij}:=h_i\delta_{ij}$. Note that $\Phi^{-1}_{\rm p}$ has real, positive eigenvalues because it has the same spectrum as the positive definite matrix $A^{-1/2}GA^{-1/2}$. Also note that
    \begin{equation}
        \sum_i\frac{1}{\lambda_i(\Phi_{\rm p})}=\tr\Phi_{\rm p}^{-1}=\sum_ih_i\tr\rho_i\sigma_i\leq\sum_ih_i=d\;.
    \end{equation}
One of the eigenvalues of $\Phi_{\rm p}$, which we denote $\lambda_{d^2}(\Phi_{\rm p})$, must equal $1$ because an equal-entry row vector is always a left eigenvector with eigenvalue $1$ of a matrix with columns summing to unity. Therefore, we may write
    \begin{equation}\label{reciprocalbound}
    \sum_{i<d^2}\frac{1}{\lambda_i(\Phi_{\rm p})}\leq d-1.
\end{equation}
The reciprocal of this expression is proportional to the harmonic mean of the first $d^2-1$ eigenvalues of $\Phi_{\rm p}$. Thus, because the geometric mean is always greater than or equal to the harmonic mean,
    \begin{equation}
    \left(\prod^{d^2-1}_{i=1}\lambda_i(\Phi_{\rm p})\right)^{\!\frac{1}{d^2-1}}\geq\left(\frac{1}{d^2-1}\sum_{i=1}^{d^2-1}\frac{1}{\lambda_i(\Phi_{\rm p})}\right)^{\!-1}\geq d+1\;,
\end{equation}
which, noting that $\lambda_{d^2}(\Phi_{\rm p})=1$, implies
\begin{equation}\label{mindet}
    \det\Phi_{\rm p}\geq(d+1)^{d^2-1}=\det\Phi_\text{SIC}\;.
\end{equation}
Equality is achieved in this iff all the $\lambda_i(\Phi_{\rm p})$ are equal, so Eq.~\eqref{mindet} is saturated iff $\lambda(\Phi_{\rm p})=\lambda(\Phi_{\rm SIC})$. We next show this implies that in fact the MIC is a SIC.

For any $\Phi_{\rm p}^{-1}$, we may write $\Phi_{\rm p}^{-1}=P^{-1}DP$ where the rows of $P$ are the left-eigenvectors of $\Phi_{\rm p}^{-1}$ and $D$ is the diagonal matrix of eigenvalues of $\Phi_{\rm p}^{-1}$. Since $\Phi_{\rm p}^{-1}$ is column-stochastic, the row vector $(1/d,\dots,1/d)$ is the (scaled) left-eigenvector of $\Phi_{\rm p}^{-1}$ with eigenvalue $1$, and so it is the first row of $P$ when the eigenvalues are in descending order. Left-eigenvectors of a matrix are right-eigenvectors of the transpose of the matrix, so we have
    \begin{equation}
        \begin{split}
        (\Phi_{\rm p}^{-1})^T\ket{v}&=A^{-1}G\ket{v}=A^{-1}GA^{-1}A\ket{v}\\
        &=A^{-1}\Phi_{\rm p}^{-1}A\ket{v}=\lambda\ket{v}\;,\\
        \implies\Phi_{\rm p}^{-1}&A\ket{v}=\lambda A\ket{v}\;,
        \end{split}
    \end{equation}
    where $\bra{v}$ is an arbitrary left-eigenvector of $\Phi_{\rm p}^{-1}$. Combined with our choice of scale for the first row of $P$, we conclude that the first column of $P^{-1}$ is $(h_1,h_2,\dots,h_{d^2})^T$.

    Now suppose $\Phi_{\rm p}$ is such that $\lambda(\Phi_{\rm p})=\lambda(\Phi_{\rm SIC})$. Then $G=P^{-1}DPA$ where $[D]_{ij}=\frac{1}{d+1}(\delta_{ij}+d\delta_{i1}\delta_{j1})$, and
    \begin{eqnarray}\label{gram}
        %\begin{split}
            [G]_{ij}&=&\sum_{klm}[P^{-1}]_{ik}[D]_{kl}[P]_{lm}[A]_{mj}\nonumber\\
            &=&\sum_{klm}[P^{-1}]_{ik}\left[\frac{1}{d+1}(\delta_{kl}+d\delta_{k1}\delta_{l1})\right][P]_{lm}\delta_{mj}h_m\nonumber\\
            %&=\sum_{kl}[P^{-1}]_{ik}\left[\frac{1}{d+1}(\delta_{kl}+d\delta_{k1}\delta_{l1})\right][P]_{lj}h_j\nonumber\\
            &=&\frac{1}{d+1}\sum_{kl}[P^{-1}]_{ik}(\delta_{kl}+d\delta_{k1}\delta_{l1})[P]_{lj}h_j\nonumber\\
            &=&\frac{1}{d+1}(h_j\delta_{ij}+dh_j[P^{-1}]_{i1}[P]_{1j})\nonumber\\
            &=&\frac{1}{d+1}(h_j\delta_{ij}+h_ih_j)\;.
    %\end{split}
    \end{eqnarray}
    In the last step we used that $[P]_{1j}=1/d$ and $[P^{-1}]_{i1}=h_i$. If this Gram matrix comes from a MIC, one may use
    \begin{equation}
        [G]_{ii}=h_i^2\tr\rho_i^2=\frac{1}{d+1}(h_i+h_i^2)\;,
    \end{equation}
    and the fact that $\tr\rho_i\leq1$ to show that $h_i\geq 1/d$. As the average $h_i$ value must be $1/d$, this implies that $h_i=1/d$ for all $i$ and furthermore that each $\rho_i$ is rank-$1$. Substituting this into Eq.~\eqref{gram} gives
    \begin{equation}
        [G]_{ij}=\frac{d\delta_{ij}+1}{d^2(d+1)}\;,
    \end{equation}
    that is, the MIC is a SIC and $\Phi_{\rm p}=\Phi_\text{SIC}$.
\end{proof}
Let $s(A)$ denote the vector of singular values of the matrix $A$ in nonincreasing order. The proof of the following lemma may be found in Appendix~A.
\begin{lemma}\label{maj}
  Let $\Phi$ be the column-quasistochastic matrix associated with an arbitrary reference process. Then
  \begin{equation}
    |\det{\Phi}| \geq \det{\Phi_{\rm SIC}} \;,
  \end{equation}
  and
  \begin{equation}
    s(\Phi)\succ_{w\log}s(\Phi_{\rm SIC})\;,
  \end{equation}
  with equality iff the MIC and post-measurement states are SICs. In that case,
  \begin{equation}
    \Phi^\dag \Phi = \Phi \Phi^\dag = \Phi_{\rm SIC}^2\;.
  \end{equation}
\end{lemma}

We are now poised to prove:
\begin{theorem}\label{unitarilyinvariant}
    Let $\Phi$ be the column-quasistochastic matrix associated with an arbitrary reference process. Then for any unitarily invariant norm $\norm{\cdot}$,
    \begin{equation}
        \norm{I-\Phi}\geq\norm{I-\Phi_{\rm SIC}}\;,
    \end{equation}
    with equality iff the MIC and post-measurement states are the same SIC.
\end{theorem}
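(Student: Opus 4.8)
The plan is to reduce the statement to a single weak-majorization relation between singular values and then apply Lemma~\ref{maj}. By Fan's dominance theorem (see section~3.5 of~\cite{Horn:1994}), every unitarily invariant norm is a monotone symmetric gauge function of the singular values, so $\norm{A}\geq\norm{B}$ holds for \emph{all} such norms precisely when $s(A)\succ_w s(B)$; it therefore suffices to show
\begin{equation}
    s(I-\Phi)\succ_w s(I-\Phi_{\rm SIC})\;.
\end{equation}
To pin down the target, observe that for the SIC reference process $\sigma_j=dH_j$, so $[\Phi_{\rm SIC}^{-1}]_{ij}=\tr H_i\sigma_j=d\,\tr H_iH_j=\frac{d\delta_{ij}+1}{d(d+1)}$; inverting yields $\Phi_{\rm SIC}=(d+1)I-\tfrac1d J$ with $J$ the all-ones matrix, hence $I-\Phi_{\rm SIC}=-d\big(I-\tfrac1{d^2}J\big)$ is $-d$ times a rank-$(d^2-1)$ orthogonal projector. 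Thus $s(I-\Phi_{\rm SIC})=(d,\dots,d,0)$, with $d$ repeated $d^2-1$ times, and the partial sums to be dominated are $kd$ for $k\leq d^2-1$ and $(d^2-1)d$ for $k=d^2$.

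Two ingredients then finish the forward direction. First, since $\sum_{i=1}^k s_i(\cdot)$ is a norm (the Ky Fan $k$-norm), the triangle inequality applied to $\Phi=(\Phi-I)+I$, together with $s_i(I)=1$, gives
\begin{equation}
    \sum_{i=1}^k s_i(I-\Phi)\geq\sum_{i=1}^k s_i(\Phi)-\sum_{i=1}^k s_i(I)=\sum_{i=1}^k s_i(\Phi)-k\;.
\end{equation}
Second, Lemma~\ref{maj} gives $\prod_{i=1}^k s_i(\Phi)\geq\prod_{i=1}^k s_i(\Phi_{\rm SIC})=(d+1)^k$ for $k\leq d^2-1$, and since $\Phi$ is invertible its singular values are positive, so the arithmetic--geometric mean inequality~\eqref{meanordering} gives $\frac1k\sum_{i=1}^k s_i(\Phi)\geq\big(\prod_{i=1}^k s_i(\Phi)\big)^{1/k}\geq d+1$. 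Combining, $\sum_{i=1}^k s_i(I-\Phi)\geq k(d+1)-k=kd$ for every $k\leq d^2-1$. For the endpoint $k=d^2$, the columns of $\Phi$ sum to unity, so $(1,\dots,1)$ is a left null vector of $I-\Phi$ and $s_{d^2}(I-\Phi)=0$; hence $\sum_{i=1}^{d^2}s_i(I-\Phi)=\sum_{i=1}^{d^2-1}s_i(I-\Phi)\geq(d^2-1)d$, which is exactly the last partial sum of $s(I-\Phi_{\rm SIC})$. This establishes the weak majorization, hence $\norm{I-\Phi}\geq\norm{I-\Phi_{\rm SIC}}$.

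For the equality clause I would run the chain backwards. For a strictly monotone norm such as the Frobenius norm (or any Schatten-$p$ norm, $1<p<\infty$), equality forces $s(I-\Phi)=s(I-\Phi_{\rm SIC})$, which through the two displayed estimates forces $\sum_{i=1}^{d^2-1}s_i(\Phi)=(d^2-1)(d+1)$; this requires equality in the arithmetic--geometric mean step, so $s_1(\Phi)=\cdots=s_{d^2-1}(\Phi)=d+1$, together with equality in Lemma~\ref{maj}, whose equality clause---with the structural argument from the proof of Lemma~\ref{det}---forces the MIC and post-measurement states to be SICs; the converse is immediate, and a brief supplementary argument via $\tr\Phi^{-1}\leq d$ (as in Lemma~\ref{det}) extends equality to the remaining norms. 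I expect this equality analysis to be the delicate part: the forward inequality is a short composition of standard facts, but the ``iff'' requires showing that the slack in the Ky Fan triangle inequality, in the arithmetic--geometric mean inequality, and in Lemma~\ref{maj} closes simultaneously only at a SIC reference process, with the $k=d^2$ endpoint and the appeal to Lemma~\ref{det} needing the most care. The conceptual point for the forward direction is not to compare $s(I-\Phi)$ with $s(I-\Phi_{\rm SIC})$ directly, but to route through $s(\Phi)$ and then invoke the multiplicative majorization of Lemma~\ref{maj}.
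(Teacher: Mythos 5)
Your proposal is correct and shares the paper's skeleton---reduce to the weak majorization $s(I-\Phi)\succ_w s(I-\Phi_{\rm SIC})=(d,\dots,d,0)$ via the monotonicity of unitarily invariant norms, with Lemma~\ref{maj} as the essential input---but the middle step is executed by a genuinely different route. The paper bounds only the \emph{total} sum $\sum_i s_i(I-\Phi)\geq d(d^2-1)$, by passing to eigenvalues ($\sum_i s_i(I-\Phi)\geq\sum_i|\lambda_i(\Phi)-1|\geq\sum_i|\lambda_i(\Phi)|-d^2$) and invoking the eigenvalue form $|\lambda(\Phi)|\succ_{w\log}\lambda(\Phi_{\rm SIC})$ that is actually established inside the appendix proof of Lemma~\ref{maj}; it then exploits the flatness of the target vector, since any vector majorizes the flat vector with the same sum. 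You instead stay with singular values throughout and bound \emph{every} partial sum directly, via the Ky Fan triangle inequality $\sum_{i=1}^k s_i(I-\Phi)\geq\sum_{i=1}^k s_i(\Phi)-k$ and the arithmetic--geometric mean inequality applied to the log-majorization $\prod_{i=1}^k s_i(\Phi)\geq(d+1)^k$. Both arguments are valid; yours has the mild advantage of consuming Lemma~\ref{maj} only in its stated (singular-value) form rather than relying on an intermediate result buried in its proof, at the cost of checking $d^2$ inequalities where the paper checks one. One caution on the equality clause: your remark that equality for a strictly monotone norm forces $s(I-\Phi)=s(I-\Phi_{\rm SIC})$ does not extend verbatim to degenerate unitarily invariant norms (e.g.\ Ky Fan $k$-norms with $k<d^2$), for which equality of norms does not in general pin down the singular value vector; here one must use the specific structure of the problem---the partial sums are each at least $kd$ while $s_k\leq\frac1k\sum_{i\leq k}s_i$, so saturating any one Ky Fan norm forces all singular values to equal $d$ (and the last to be $0$)---before tracing equality back through Lemma~\ref{maj}. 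The paper's own proof is silent on this point, so your sketch is no less complete than the original, but the ``brief supplementary argument'' you defer is where the real work of the iff resides.
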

\begin{proof}
    By Corollary 3.5.9 in \cite{Horn:1994}, every unitarily invariant norm is monotone with respect to the partial order on matrices induced by weak majorization of the vector of singular values. $I-\Phi$ is singular with exactly one eigenvalue equal to zero, so one
of its singular values is zero as well. Then
\begin{equation}
    \begin{split}
    s(I-\Phi)&\succ\left\{\frac{\sum_is_i(I-\Phi)}{d^2-1},\ldots,\frac{\sum_is_i(I-\Phi)}{d^2-1}\right\}\\
    &\succ_w\{d,\ldots,d\}=s(I-\Phi_{\rm SIC})
\end{split}
\end{equation}
if
\begin{equation}
    \sum_is_i(I-\Phi)\geq d(d^2-1)\;.
\end{equation}
We have
\begin{eqnarray}\label{singsum}
    \sum_is_i(I-\Phi)&\geq&\sum_i|\lambda_i(I-\Phi)|=\sum_i|\lambda_i(\Phi)-1|\nonumber\\
    &\geq&\sum_i(|\lambda_i(\Phi)|-1)\geq\sum_i\lambda_i(\Phi_{\rm SIC})-d^2\nonumber\\
    &=&d(d^2-1)\;,
\end{eqnarray}
where the first inequality follows Eq.~3.3.13a in \cite{Horn:1994}, the second follows from the triangle inequality, and the last follows from Lemma \ref{maj}.

It remains to show that the inequality is saturated only if the MIC and post-measurement states are the same SIC. From Lemma 2, we know that $|\det{\Phi}| \geq \det{\Phi_{\rm SIC}}$ with equality iff both the measurement and post-measurement states are SICs. Then, using the mean orderings \eqref{meanordering}, we can conclude the inequality
\begin{equation}
\begin{split}
    \sum_{i=1}^{d^2}|\lambda_i(\Phi)|&=1+\sum_{i=1}^{d^2-1}|\lambda_i(\Phi)|\\
    &\geq 1+(d^2-1)\left(\prod_{i=1}^{d^2-1}|\lambda_i(\Phi)|\right)^\frac{1}{d^2-1}\\
    &=1+(d^2-1)\abs{\det{\Phi}}^\frac{1}{d^2-1}\\
    &\geq 1+(d^2-1)|\det{\Phi_{\rm SIC}}|^\frac{1}{d^2-1}=\sum_{i=1}^{d^2}\lambda_i(\Phi_{\rm SIC})
\end{split}
\end{equation}
is also saturated iff both the measurement and post-measurement states are SICs which, in turn, implies the same for the last inequality of \eqref{singsum}.

If both the measurement and the post-measurement states are SICs, then Lemma 2 implies that $\Phi$ commutes with its own adjoint, making it a normal matrix. In turn, this implies that $|\lambda_i(\Phi)| = s_i(\Phi) = s_i(\Phi_{\rm SIC})$. The spectrum of $\Phi$ is the spectrum of $\Phi_{\rm SIC}$, but with nonzero phases allowed: $((d+1)e^{i\theta_1},\ldots,(d+1)e^{i\theta_{d^2-1}}, 1)$. Subject to this constraint, the triangle inequality is saturated if and only if all the eigenvalues are real and positive, which is also when $\tr\Phi$ and $\tr\Phi^{-1}$ are maximized. Considering the latter, the diagonal elements are $(1/d)\tr\Pi_i\Pi'_i$ for SIC projectors $\{\Pi_i\}$ and $\{\Pi'_i\}$ and so the maximum trace is obviously attained only if $\Pi_i=\Pi'_i$ for all $i$. So, in order for the sum of the singular values $s_i(I-\Phi)$ to attain its lower bound, both sets must be the same SIC.
\end{proof}

It is known that no quasiprobability representation of
quantum theory can be entirely nonnegative~\cite{Ferrie:2009}. What
does this mean in our formalism?
\begin{figure}
  \includegraphics[width=\linewidth]{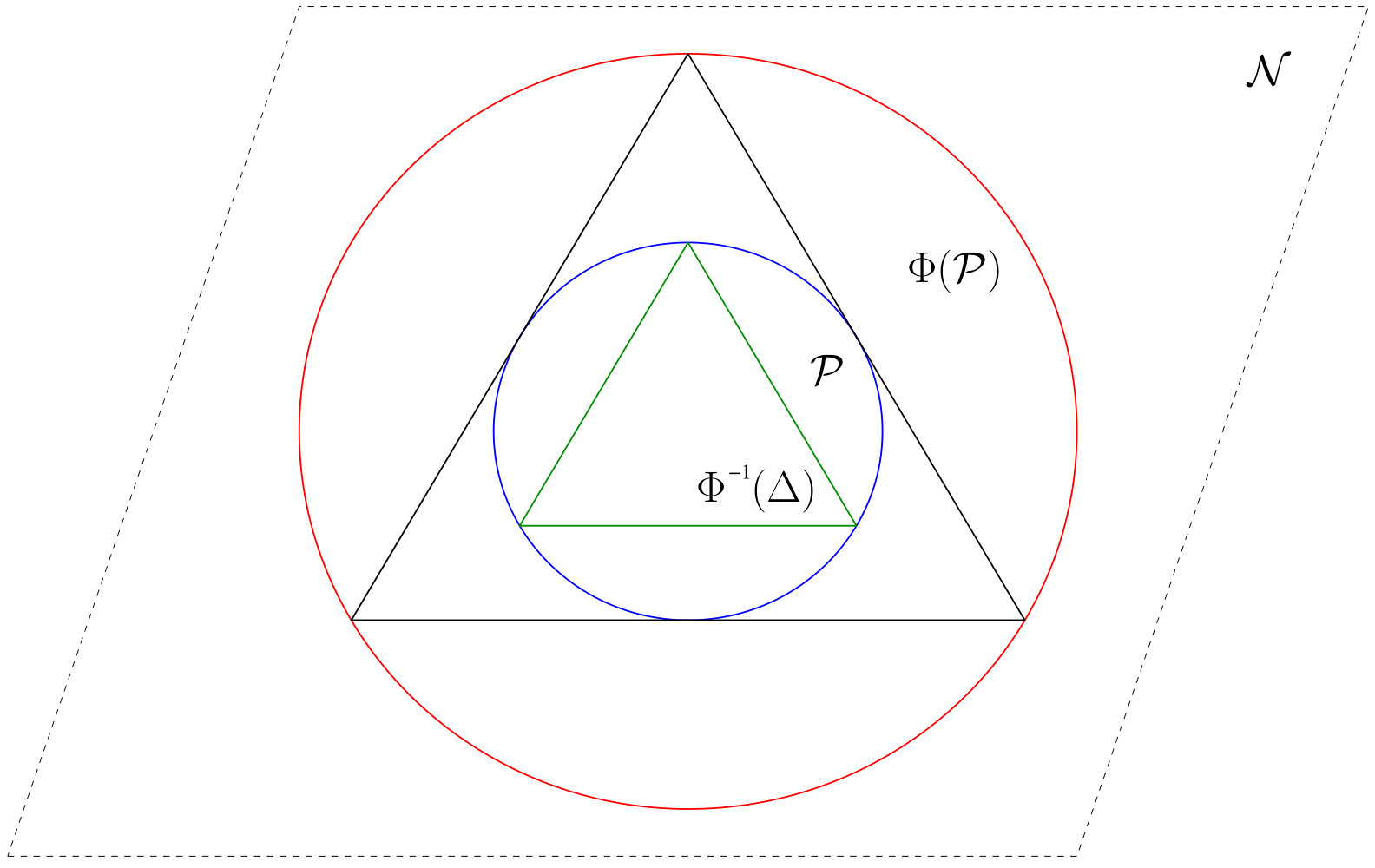}
\caption{$\mathcal{N}$ is the normalized hyperplane of $d^2$-element quasiprobability vectors and the outer, black triangle represents the $(d^2-1)$-simplex $\Delta$ of probabilities. For a given MIC, the inner, green triangle is the simplex $\Phi^{-1}(\Delta)$, the blue circle is the image of $\mathcal{Q}_d$ under the Born Rule, denoted $\mathcal{P}$, and the red circle is $\Phi(\mathcal{P})$. $\mathcal{P}$ and $\Phi(\mathcal{P})$ are portrayed with
circles to capture convexity and inclusion relationships only; they need not bear any resemblance to spheres.}
\label{figure2}
\end{figure}

Let $\mathcal{N}$ be the normalized hyperplane of $d^2$-element quasiprobability vectors. Within this is the $(d^2-1)$-simplex of probability vectors, $\Delta$\@. For any MIC, $d$-dimensional quantum state space $\mathcal{Q}_d$ is mapped by the Born Rule to a convex subset of $\Delta$, denoted $\mathcal{P}$. Note that $\Phi^{-1}(\Delta)$ is equal to the convex hull of the $d^2$ probability vectors $\tr H_j\sigma_i$, that is, the probabilities for the MIC measurement for each post-measurement state.
Consequently, $\Phi^{-1}(\Delta)\subset \mathcal{P}$, which implies $\Delta\subset \Phi(\mathcal{P})$. These inclusions must be strict, i.e., $\Phi\neq I$: When the MIC and post-measurement states are rank-$1$, the vertices of the simplex will be among the pure-state probability vectors, but $\mathcal{P}$ contains more pure states than there are vertices of $\Phi^{-1}(\Delta)$. Since the image of some probability vectors consistent with quantum theory must leave the probability simplex under the application of $\Phi$, we have demonstrated that the appearance of negativity is unavoidable in our framework and is in fact characterized by the fact that $\Phi$ cannot equal the identity. Figure \ref{figure2} illustrates the situation.

The weak log majorization result of Lemma \ref{maj} has at least one more important implication for quantifying the quantum deviation from classicality. Instead of looking at the functional form of Eq.~\eqref{ltpanalog} and considering how much of a deviation from the LTP it represents, one may approach the problem from a geometric perspective.

Classically one can always imagine assigning probability 1 to an outcome of a putative ``maximally informative measurement''---for instance when one knows the system's exact phase space point. However, in an interpretation of quantum theory without hidden variables, whatever one might mean by ``maximally informative,'' one cannot mean that the reference measurement's full probability simplex is available. Indeed, quantum mechanics does not allow probability 1 for the outcome of any MIC
measurement~\cite{Fuchs:2002}. Thus deviation from classicality can also be captured by the fact that the region of probabilities compatible with quantum states is strictly smaller than the full ($d^2-1$)-simplex. In this setting, the irreducible deviation from classicality is defined by the largest possible region for a reference measurement's probability simplex. The following theorem establishes that a SIC measurement uniquely maximizes the Euclidean volume of this region, thereby
answering a question raised by one of us in 2002 \cite[pp.\ 475, 571]{Fuchs:2014}.

\begin{theorem}\label{vol}
    For any MIC in dimension $d$, let $\mathcal{P}$ denote the image of $\mathcal{Q}_d$ under the Born Rule and let $\textnormal{vol}_\textnormal{E}(\mathcal{P})$ denote its Euclidean volume. Then
    \begin{equation}
        \textnormal{vol}_\textnormal{E}(\mathcal{P})\leq\textnormal{vol}_\textnormal{E}(\mathcal{P_\textnormal{SIC}})\;,
\end{equation}
with equality iff the MIC is a SIC. Furthermore,
\begin{equation}\label{volPsic}
    \textnormal{vol}_\textnormal{E}(\mathcal{P_\textnormal{SIC}})=\sqrt{\frac{(2\pi)^{d(d-1)}}{d^{d^2-2}(d+1)^{d^2-1}}}\frac{\Gamma(1)\cdots\Gamma(d)}{\Gamma(d^2)}\;.
\end{equation}
\end{theorem}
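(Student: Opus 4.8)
The plan is to convert this geometric statement into a statement about the determinant of the MIC Gram matrix and then feed it into Lemma~\ref{det}. First I would put the Hilbert--Schmidt inner product $\langle A,B\rangle=\tr AB$ on the real vector space of Hermitian operators, so that $\mathcal{Q}_d$ is a convex body of dimension $d^2-1$ inside the affine hyperplane $\{\rho:\tr\rho=1\}$, carrying a well-defined Hilbert--Schmidt volume $\mathrm{vol}_{\mathrm{HS}}(\mathcal{Q}_d)$ that depends only on $d$. The Born Rule map $L:\rho\mapsto(\tr\rho H_i)_{i=1}^{d^2}$ is affine, and it is injective precisely because the MIC is informationally complete; since $\sum_iH_i=I$ it carries $\{\tr\rho=1\}$ into $\{\sum_ip_i=1\}$. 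Hence $\mathcal{P}=L(\mathcal{Q}_d)$ and $\mathrm{vol}_{\mathrm E}(\mathcal{P})=J_L\,\mathrm{vol}_{\mathrm{HS}}(\mathcal{Q}_d)$, where $J_L$ is the Jacobian (volume-scaling factor) of the linear part of $L$ viewed as a map between the two $(d^2-1)$-dimensional Euclidean tangent spaces.

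Next I would evaluate $J_L$. Since $\tr XH_i=\tr X\tilde H_i$ for traceless $X$, where $\tilde H_i:=H_i-(h_i/d)I$, one may replace $H_i$ by $\tilde H_i$ throughout; choosing an orthonormal basis $\{E_a\}$ of the traceless Hermitian operators gives $J_L^2=\det\big[\sum_i(\tr E_a\tilde H_i)(\tr E_b\tilde H_i)\big]_{a,b}$, which is the product of the nonzero eigenvalues of the $d^2\times d^2$ Gram matrix $\tilde G_{ij}:=\tr\tilde H_i\tilde H_j$ (a matrix of rank $d^2-1$, the all-ones vector lying in its kernel since $\sum_i\tilde H_i=0$). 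A short computation gives $\tilde G=G-\tfrac1d\,hh^T$ with $G_{ij}:=\tr H_iH_j$ and $h:=(h_1,\dots,h_{d^2})^T=G\mathbf 1$, so $\tilde G=G^{1/2}\Pi G^{1/2}$ where $\Pi$ is the orthogonal projector onto $w^\perp$ and $w:=d^{-1/2}G^{1/2}\mathbf 1$ is a unit vector; the pseudodeterminant therefore equals $\det G\cdot(w^TG^{-1}w)=d\det G$. This yields the key identity $\mathrm{vol}_{\mathrm E}(\mathcal{P})=\sqrt{d\det G}\;\mathrm{vol}_{\mathrm{HS}}(\mathcal{Q}_d)$, reducing the optimization to maximizing $\det G$ over MICs.

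Then the optimization follows from tools already in hand. The proportional reference process has $\Phi_{\mathrm p}^{-1}=GA^{-1}$ with $A=\mathrm{diag}(h_i)$, hence $\det G=(\prod_ih_i)\big/\det\Phi_{\mathrm p}$. Lemma~\ref{det} gives $\det\Phi_{\mathrm p}\ge\det\Phi_{\mathrm{SIC}}=(d+1)^{d^2-1}$, with equality iff the MIC is a SIC; and since $\sum_ih_i=d$, the arithmetic--geometric mean inequality \eqref{meanordering} gives $\prod_ih_i\le d^{-d^2}$, with equality iff all $h_i=1/d$. Combining, $\det G\le d^{-d^2}(d+1)^{-(d^2-1)}=\det G_{\mathrm{SIC}}$, and equality already forces the MIC to be a SIC from the $\det\Phi_{\mathrm p}$ equality alone. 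This proves $\mathrm{vol}_{\mathrm E}(\mathcal{P})\le\mathrm{vol}_{\mathrm E}(\mathcal{P}_{\mathrm{SIC}})$ with the stated equality condition.

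Finally, for the explicit value I would insert $\det G_{\mathrm{SIC}}=d^{-d^2}(d+1)^{-(d^2-1)}$ into $\mathrm{vol}_{\mathrm E}(\mathcal{P}_{\mathrm{SIC}})=\sqrt{d\det G_{\mathrm{SIC}}}\;\mathrm{vol}_{\mathrm{HS}}(\mathcal{Q}_d)$ and use the known Hilbert--Schmidt volume of the set of $d\times d$ density matrices, $\mathrm{vol}_{\mathrm{HS}}(\mathcal{Q}_d)=\sqrt d\,(2\pi)^{d(d-1)/2}\,\Gamma(1)\cdots\Gamma(d)\big/\Gamma(d^2)$ (\.{Z}yczkowski--Sommers), after which collecting powers of $d$ and $d+1$ reproduces Eq.~\eqref{volPsic}. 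I expect the main obstacle to be the second step: pinning down $J_L$ exactly --- the passage to the traceless subspace, the identification of the relevant $(d^2-1)\times(d^2-1)$ determinant with the product of the nonzero eigenvalues of $\tilde G$, and the pseudodeterminant identity $\mathrm{pdet}(\tilde G)=d\det G$ --- while keeping the factor $\sqrt d$ arising from the unit-trace constraint consistent with the normalization of the cited Hilbert--Schmidt volume.
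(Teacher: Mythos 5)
Your proposal is correct, and its overall strategy is the same as the paper's: reduce the volume comparison to the inequality $\det G\leq\det G_{\rm SIC}$ (obtained, exactly as you do, from Lemma~\ref{det} applied to $\Phi_{\rm p}^{-1}=GA^{-1}$ together with the AM--GM bound $\prod_i h_i\leq d^{-d^2}$), and then obtain the explicit value from the \.Zyczkowski--Sommers formula for $\mathrm{vol}_{\rm HS}(\mathcal{Q}_d)$. Where you genuinely differ is in the middle step, the volume-scaling factor between $\mathcal{Q}_d$ and $\mathcal{P}$. The paper works with the full $d^2$-dimensional volume element, $\mathrm{d}V_{\rm E}=\sqrt{\det G}\,\mathrm{d}\Omega_{\rm HS}$, argues the inequality by an $\epsilon$-thickening of the unit-trace hyperplane and a limit $\epsilon\to 0$, and then computes the SIC proportionality constant separately by writing down the induced metric on the simplex $\Delta$ in the SIC basis. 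You instead restrict the linear part of the Born-Rule map to the traceless subspace and compute the exact $(d^2-1)$-dimensional Jacobian as the pseudodeterminant of the centered Gram matrix $\tilde G=G-\tfrac1d hh^T$, arriving at the single clean identity $\mathrm{vol}_{\rm E}(\mathcal{P})=\sqrt{d\det G}\,\mathrm{vol}_{\rm HS}(\mathcal{Q}_d)$ valid for every MIC; your pseudodeterminant computation $\mathrm{pdet}(\tilde G)=\det G\cdot(w^TG^{-1}w)=d\det G$ is correct, and the resulting constant $\sqrt{d\det G_{\rm SIC}}=(d(d+1))^{-(d^2-1)/2}$ agrees with the paper's factor $\sqrt{(d^2+d)^{d^2-1}}$. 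Your route buys a uniform, rigorous treatment of the dimensional reduction (avoiding the thickening heuristic and the separate SIC metric computation), at the cost of the pseudodeterminant machinery; the paper's route makes the induced Hilbert--Schmidt geometry on $\Delta$ more explicit.
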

\noindent The proof of Theorem \ref{vol} involves methods of differential geometry which would be distracting here. We direct the interested reader to Appendix~B for details.

The $(d^2-1)$-simplex $\Delta$ has Euclidean volume~\cite{Caves:2001}
\begin{equation}\label{volsimplex}
      \text{vol}_\text{E}(\Delta)=\frac{d}{\Gamma(d^2)}\;,
  \end{equation}
  so we can calculate the ratio of the Euclidean volumes of $\mathcal{P}_\text{SIC}$ and the simplex it lies within,
  \begin{equation}
      \frac{\textnormal{vol}_E(\mathcal{P}_\text{SIC})}{\text{vol}_\text{E}(\Delta)}=\sqrt{\frac{(2\pi)^{d(d-1)}}{d^{d^2}(d+1)^{d^2-1}}}\Gamma(1)\cdots\Gamma(d)\;.
  \end{equation}

  When $d=2$, quantum state space is the Bloch ball and $\mathcal{P}_\text{SIC}$ is the largest ball which can be inscribed in the regular tetrahedron $\Delta_3$,
  \begin{equation}
      \frac{\text{vol}_\text{E}(\mathcal{P}_\text{SIC})}{\text{vol}_\text{E}(\Delta_3)}=\frac{\pi}{6\sqrt{3}}\approx 0.3023\;.
  \end{equation}
When $d=3$,
  \begin{equation}
      \frac{\text{vol}_\text{E}(\mathcal{P}_\text{SIC})}{\text{vol}_\text{E}(\Delta_8)}=\frac{\pi^3}{1296\sqrt{3}}\approx 0.0138\;.
  \end{equation}
In general, the ratio is very rapidly decreasing, signifying a greater and greater deviation from classicality with each Hilbert space dimension.

Theorems \ref{unitarilyinvariant} and \ref{vol} show that the SICs provide a way of casting the Born Rule in wholly probabilistic terms, which by two different standards make the difference between classical and quantum as small as possible.  Of all the representations deriving from our general procedure, the representation given by Eq.~\eqref{urgleichung} is the essential one for specifying how quantum is quantum.\medskip

{\bf Acknowledgements}.  CAF thanks K.~Kato and M.~P. M\"uller for discussions and A.~Khrennikov for accusing us of being ``really addicted'' to the SICs for no good cause~\cite{Khrennikov:2017}---this paper attempts to show him he was wrong.  JBD and CAF were supported in part by the Foundational Questions Institute Fund on the Physics of the Observer (grant FQXi-RFP-1612), a donor advised fund at the Silicon Valley Community Foundation.

%%%START old appendix A
\begin{comment}
\appendix
\label{appendix:equal-spectra}
\section{Appendix A}

\end{comment}

%%%END old appendix A

\appendix
\label{appendix:equal-spectra}
\section{Appendix A: Proof of Lemma \ref{maj}}
For a MIC $\{E_i\}$ and a post-measurement set $\{\sigma_j\}$,
\begin{equation}
    \left[\Phi^{-1}\right]_{ij}=\tr E_i\sigma_j\;.
\end{equation}
The elements of the MIC may be expanded in the SIC basis
\begin{equation}\label{alpha}
    E_i=\sum_k\left[\alpha\right]_{ik}H_k\;,
\end{equation}
so we may write
\begin{equation}
    \left[\Phi^{-1}\right]_{ij}=\sum_k\left[\alpha\right]_{ik}\tr H_k\sigma_j=\sum_k\left[\alpha\right]_{ik}p(k|j)\;,
\end{equation}
where $p(k|j)$ is the probabilistic representation of the state $\sigma_j$ with respect to the SIC $\{H_k\}$. The $\alpha$ matrix must be invertible because it is a transformation between two bases, so the probability vectors can be written
\begin{equation}\label{alphaconditionals}
    p(i|j)=\sum_k\left[\alpha^{-1}\right]_{ik}\left[\Phi^{-1}\right]_{kj}\;.
\end{equation}
We know that SIC probability vectors satisfy \cite{Fuchs:2013}
\begin{equation}
    \sum_ip(i|j)^2\leq\frac{2}{d(d+1)}\quad \forall j\;,
\end{equation}
so we have
\begin{equation}\label{sicbound}
    \sum_{i}\left(\sum_k\left[\alpha^{-1}\right]_{ik}\left[\Phi^{-1}\right]_{kj}\right)^{\!2}\leq \frac{2}{d(d+1)}\quad \forall j\;.
\end{equation}
Summing over $j$, we then have
\begin{equation}
    \sum_{ij}\left(\sum_k\left[\alpha^{-1}\right]_{ik}\left[\Phi^{-1}\right]_{kj}\right)^{\!2}\leq \frac{2d}{d+1}\;.
\end{equation}
This expression is the sum of the absolute square entries of a matrix, which is equivalent to the square of the Frobenius norm of the matrix:
\begin{equation}
    \norm{\alpha^{-1}\Phi^{-1}}_2^2=\sum_is^2(\alpha^{-1}\Phi^{-1})\leq\frac{2d}{d+1}\;.
\end{equation}
From \cite{Horn:1994} 3.1.11, for any square matrix $A$,
\begin{equation}
    \sum_i|\lambda_i(A)|^2\leq\sum_i\varsigma^2(A)\;,
\end{equation}
so we have a general bound on the absolute squared spectrum:
\begin{equation}\label{SICupperbound}
    \sum_i|\lambda_i(\alpha^{-1}\Phi^{-1})|^2\leq \frac{2d}{d+1}\;.
\end{equation}
Eq.~\eqref{alphaconditionals} shows that $\alpha^{-1}\Phi^{-1}$ is column-stochastic and thus that one of its eigenvalues is $1$, so we may write: 
\begin{equation}
    \sum_{i>1}|\lambda_i(\alpha^{-1}\Phi^{-1})|^2\leq\frac{2d}{d+1}-1=\frac{d-1}{d+1}\;.
\end{equation}
Now, using the arithmetic-geometric mean inequality,
\begin{equation}
    \begin{split}
    \frac{d-1}{d+1}&\geq\sum_{i>1}|\lambda_i(\alpha^{-1}\Phi^{-1})|^2\\
    &\geq(d^2-1)\left(\prod_{i>1}|\lambda_i(\alpha^{-1}\Phi^{-1})|^2\right)^{\frac{1}{d^2-1}}\\
    &=(d^2-1)|\det\alpha^{-1}\Phi^{-1}|^{\frac{2}{d^2-1}}\;,
\end{split}
\end{equation}
which implies
\begin{equation}
    \begin{split}
    |\det\alpha^{-1}\Phi^{-1}|&\leq\left(\frac{d-1}{(d+1)(d^2-1)}\right)^{\frac{d^2-1}{2}}\\
    &=\left(\frac{1}{d+1}\right)^{d^2-1}=\det\Phi^{-1}_\text{SIC}\;.
\end{split}
\end{equation}
From Eq.~\eqref{alpha}, we can write
\begin{equation}
    \begin{split}
\tr E_iE_j&=\sum_{kl}\alpha_{ik}\alpha_{jl}\tr H_kH_l\iff G=\alpha G_{\rm SIC}\alpha^T\\
&\iff \det G=(\det \alpha)^2\det G_{\rm SIC}\;,
\end{split}
\end{equation}
where $G$ is the MIC Gram matrix and $G_{\rm SIC}$ is the SIC Gram matrix. Recall the definition of the $A$ matrix from the proof of Lemma \ref{det}. The arithmetic-geometric mean inequality shows $\det A\leq(1/d)^{d^2}$ with equality iff $h_i=1/d$. Then, since $G=\Phi_{\rm p}^{-1}A$, Lemma \ref{det} shows
\begin{eqnarray}\label{maxdet}
    \det G =(\det\Phi_{\rm p}^{-1})(\det A)&\leq& (\det\Phi^{-1}_{\rm SIC})(1/d)^{d^2}\nonumber\\
    &=&\det G_{\rm SIC}\;,
\end{eqnarray}
with equality iff the MIC is a SIC. This implies $(\det\alpha)^2\leq 1$, and so $|\det\alpha|\leq 1$. Since $|\det\alpha^{-1}\Phi^{-1}|=|\det\alpha^{-1}||\det\Phi^{-1}|$, we conclude that
\begin{equation}
    |\det\Phi^{-1}|\leq\det\Phi^{-1}_\text{SIC}\;.
\end{equation}
Equivalently, $\det\Phi_\text{SIC}\leq |\det\Phi|$. Theorem 3.3.2 in \cite{Horn:1994} shows $s(A)\succ_{\log} |\lambda(A)|$ for an arbitrary matrix $A$. To show the desired weak log majorization result, we wish to prove $|\lambda(\Phi)|\succ_{w\log}\lambda(\Phi_\text{SIC})$. For this we show weak majorization of the log of the entries. 
\begin{equation}
        \begin{split}
            \log|\lambda(\Phi)|&\succ\left(\frac{\sum_{i=1}^{d^2}\log|\lambda_i(\Phi)|}{d^2-1},\dots,\frac{\sum_{i=1}^{d^2}\log|\lambda_i(\Phi)|}{d^2-1},0\right)\\
            &=\left(\frac{\log|\det\Phi|}{d^2-1},\ldots,\frac{\log|\det\Phi|}{d^2-1},0\right)\\
            &\succ_w\left(\frac{\log\det\Phi_\text{SIC}}{d^2-1},\dots,\frac{\log\det\Phi_\text{SIC}}{d^2-1},0\right)\\
            &=(\log(d+1),\ldots,\log(d+1),0)=\lambda(\log\Phi_\text{SIC})\;.
    \end{split}
    \end{equation}
    Thus,
\begin{equation}
  s(\Phi)\succ_{\log}|\lambda(\Phi)|\succ_{w\log}\lambda(\Phi_\text{SIC})=s(\Phi_{\rm SIC})\;.
  \label{eq:fortynineish}
\end{equation}
If $\{H_i\}$ and $\{\sigma_j\}$ are SICs, $\Phi^{-1}_{ij}=\frac{1}{d}\tr\Pi_i\Pi_j'$, where $\{\Pi_i\}$ and $\{\Pi_j'\}$ are SIC projectors in dimension $d$. Then
\begin{equation}
    \begin{split}
        &\left[\Phi^{-1}\Phi^{{-1}\dag}\right]_{ij}=\frac{1}{d^2}\sum_k(\tr\Pi_i\Pi_k')(\tr\Pi_j\Pi_k')\\
        &=\frac{1}{d^2}\tr\left[(\Pi_i\otimes\Pi_j)\left(\sum_k\Pi_k'\otimes\Pi_k'\right)\right]\\
        &=\frac{1}{d^2}\tr\left[(\Pi_i\otimes\Pi_j)\left(\frac{2d}{d+1}P_\text{sym}\right)\right]\\
        &=\frac{1}{d(d+1)}\tr\left[(\Pi_i\otimes\Pi_j)\left(I\otimes I+\sum_{kl}^d\ketbra{k}{l}\otimes\ketbra{l}{k}\right)\right]\\
        &=\frac{1+\tr\Pi_i\Pi_j}{d(d+1)}=\frac{d\delta_{ij}+d+2}{d(d+1)^2}=\left[\Phi_{\rm SIC}^{-2}\right]_{ij}\;,
    \end{split}
    \label{eq:fiftyish}
\end{equation}
where $P_{\rm sym}$ is the projector onto the symmetric subspace of $\mathcal{H}_d^{\otimes 2}$ and in the third step we employed the fact that the SICs form a minimal $2$-design \cite{Renes:2004}. This shows that the modulus of $\Phi$ is equal to $\Phi_{\rm SIC}$ and thus the singular values of $\Phi$ and $\Phi_\text{SIC}$ coincide. The product $\Phi^{{-1}\dag}\Phi^{-1}$ works out to be the same, by interchanging the roles of the two SICs.

On the other hand, suppose $s(\Phi)=s(\Phi_\text{SIC})$. The product of all the singular values is the absolute value of the determinant \cite{Horn:1994}, so $|\det\Phi^{-1}|=\det\Phi^{-1}_\text{SIC}\implies|\det\alpha|=1\implies\det G=\det G_{\rm SIC}\iff \{E_i\}$ is a SIC. Carrying through the consequences of the MIC being a SIC allows us to see from Eq.~\eqref{sicbound} that $\sigma_j$ is rank-$1$ because the upper bound is
saturated for SIC probability vectors. We may expand the $\{\sigma_j\}$ in the SIC projector basis,
\begin{equation}
    \sigma_j=\sum_k\left[\beta\right]_{jk}\Pi_k\;.
\end{equation}
Acting on both sides by a SIC POVM element and computing the trace of both sides, we see
\begin{equation}
    \left[\Phi^{-1}\right]_{ij}=\tr E_i\sigma_j=\sum_k\left[\beta\right]_{jk}\tr E_i\Pi_k=[\Phi^{-1}_{\rm SIC}\beta^T]_{ij}\;,
\end{equation}
so $|\det\Phi^{-1}|=|\det\Phi^{-1}_{\rm SIC}||\det\beta^T|=\det\Phi^{-1}_{\rm SIC}$ implies $|\det\beta|=1$. Denoting the Gram matrix of states by $g$, we have, in the same way as before,
\begin{equation}
    \det g=(\det\beta)^2\det g_{\rm SIC}=\det g_{\rm SIC}\;.
\end{equation}
We now prove that $\det g=\det g_{\rm SIC}$ implies that the basis of projectors forms a SIC. The following lemma is due to Huangjun Zhu~\cite{Zhu:2017}. We only use part of Zhu's conclusion, but the lemma is of enough interest to present in full.

\begin{lemma}[Zhu]\label{zhu}
    Let $\lambda$ be the spectrum of the Gram matrix $g$ of a normalized basis of positive semidefinite operators $\Pi_j$ sorted in nonincreasing order. Then $\lambda\succ\lambda_\textnormal{SIC}$ with equality iff $\Pi_j$ forms a SIC.
\end{lemma}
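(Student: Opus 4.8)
The plan is to show that the entire majorization $\lambda\succ\lambda_{\rm SIC}$ collapses onto a single scalar bound on the top eigenvalue of the Gram matrix, $\lambda_1(g)\ge d$, after which everything else is bookkeeping. Write $N:=d^2$; here the $\Pi_j$ are the $N$ rank-one projectors (pure states) of the preceding discussion, forming a basis, so the matrix $g$ with $g_{ij}=\tr\Pi_i\Pi_j$ is real symmetric, positive semidefinite, has unit diagonal and nonnegative entries, and satisfies $\tr g=\sum_j\tr\Pi_j^2=N$. A direct computation of the spectrum of $g_{\rm SIC}=(dI+J)/(d+1)$, with $J$ the all-ones matrix, gives $\lambda_{\rm SIC}=(d,\tfrac{d}{d+1},\dots,\tfrac{d}{d+1})$: one large eigenvalue and $N-1$ equal ones, with the same total $N$.

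First I would prove $\lambda_1(g)\ge d$. From the variational characterization $\lambda_1(g)=\max_{\|c\|=1}\tr\big[(\sum_jc_j\Pi_j)^2\big]$, taking $c_j=1/d$ for all $j$ gives $\lambda_1(g)\ge\tfrac1{d^2}\tr\big[(\sum_j\Pi_j)^2\big]\ge\tfrac1{d^2}\cdot\tfrac{(\tr\sum_j\Pi_j)^2}{d}=\tfrac1{d^2}\cdot\tfrac{d^4}{d}=d$, where the middle step is the Cauchy--Schwarz (power-mean) inequality applied to the eigenvalues of the Hermitian operator $\sum_j\Pi_j$, whose trace is $d^2$; this is essentially the only place positivity of the $\Pi_j$ enters this half of the argument.

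Next I would deduce the majorization in two elementary moves, each exploiting that $\lambda_{\rm SIC}$ is the most-flattened vector with its given maximum and total. Flattening the tail, $\lambda(g)\succ(\lambda_1(g),\bar r,\dots,\bar r)$ with $\bar r:=(N-\lambda_1(g))/(N-1)$, since for every $k$ the average of the $k-1$ largest of $\lambda_2(g),\dots,\lambda_N(g)$ is at least their overall average. Then $(\lambda_1(g),\bar r,\dots,\bar r)\succ(d,\tfrac{d}{d+1},\dots,\tfrac{d}{d+1})=\lambda_{\rm SIC}$: both vectors have total $N$, both are sorted nonincreasingly, and the difference of their $k$-th partial sums simplifies to $(\lambda_1(g)-d)\tfrac{N-k}{N-1}\ge0$ using $\lambda_1(g)\ge d$. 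Composing yields $\lambda\succ\lambda_{\rm SIC}$.

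Finally, the equality clause. If $\lambda(g)=\lambda_{\rm SIC}$ then $g=\tfrac{d}{d+1}I+\tfrac{d^2}{d+1}uu^{T}$ for the unit eigenvector $u$ of the simple eigenvalue $d$; the constraint $g_{ii}=1$ forces $u_i^2=1/d^2$, and nonnegativity of $g_{ij}=\tfrac{d^2}{d+1}u_iu_j$ for $i\ne j$ forces all $u_i$ to share a sign, so $u=\tfrac1d(1,\dots,1)^{T}$ and $g=g_{\rm SIC}$, i.e.\ $\tr\Pi_i\Pi_j=(d\delta_{ij}+1)/(d+1)$, which is precisely the definition of a SIC; conversely a SIC has spectrum $\lambda_{\rm SIC}$ by the computation above. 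Since $\det g=\prod_i\lambda_i(g)$ is strictly Schur-concave on positive vectors, $\lambda\succ\lambda_{\rm SIC}$ upgrades to $\lambda=\lambda_{\rm SIC}$ as soon as $\det g=\det g_{\rm SIC}$, which is the form in which the lemma is actually invoked. I expect the only genuine obstacle to be spotting the reduction to $\lambda_1(g)\ge d$; everything after that is routine.
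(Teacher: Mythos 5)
Your handling of the majorization itself is sound and is essentially the paper's own argument: both proofs reduce everything to the single bound $\lambda_1(g)\ge d$ obtained from a Rayleigh quotient with the uniform test vector, followed by the same two flattening steps. (The paper phrases the Rayleigh quotient in terms of the frame superoperator $\mathcal{F}=\sum_j\kketbra{\Pi_j}{\Pi_j}$, which shares the spectrum of $g$, getting $\lambda_1\ge\frac{1}{d}\sum_j(\tr\Pi_j)^2\ge d$; you work with $g$ directly and insert an extra Cauchy--Schwarz step $\tr S^2\ge(\tr S)^2/d$ for $S=\sum_j\Pi_j$. These are interchangeable.) Where you genuinely diverge is the equality clause: the paper deduces $\mathcal{F}=\frac{d}{d+1}\left(\mathbf{I}+\kketbra{I}{I}\right)$ and then invokes Cor.~1 of \cite{Appleby:2015} to conclude the $\Pi_j$ form a SIC, whereas you reconstruct $g$ itself from its spectrum, using the unit diagonal to fix $|u_i|=1/d$ and the entrywise nonnegativity $\tr\Pi_i\Pi_j\ge0$ to fix the signs. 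That route is more self-contained, since it avoids the external corollary entirely.

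There is, however, a gap: you assume at the outset that the $\Pi_j$ are rank-one projectors, while the lemma is stated for a normalized basis of positive semidefinite operators, i.e.\ only $\tr\Pi_j^2=1$ is given, and part of the content of the equality clause is that saturation \emph{forces} rank one. Under your assumption $\tr\Pi_i\Pi_j=(d\delta_{ij}+1)/(d+1)$ is indeed the definition of a SIC, but without it that Gram matrix does not by itself make the $\Pi_j$ a SIC, and your input $\tr S=d^2$ is also unjustified. Both defects are repairable inside your own framework. For the inequality: positivity and normalization give $1=\sum_i\lambda_i(\Pi_j)^2\le\left(\sum_i\lambda_i(\Pi_j)\right)^2$, so $\tr\Pi_j\ge1$ and $\tr S\ge d^2$, which is all your chain of estimates needs. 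For the equality case: $\lambda_1=d$ forces $\tr S^2\le d^2\lambda_1=d^3$, while Cauchy--Schwarz gives $\tr S^2\ge(\tr S)^2/d\ge d^3$, so $\tr\Pi_j=1$ for every $j$; combined with $\tr\Pi_j^2=1$ and positive semidefiniteness this yields rank one (this mirrors the paper's step from $\frac{1}{d}\sum_j(\tr\Pi_j)^2=d$). With those two sentences added, your proof is complete.
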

\begin{proof}
    By assumption $\text{tr}\Pi_j^2=1$ for all $j$. Since the eigenvalues of $\Pi_j$ are nonnegative,
    \begin{equation}
        1=\text{tr}\Pi_j^2=\sum_i\lambda^2_i(\Pi_j)\leq\sum_i\lambda_i(\Pi_j)=\text{tr}\Pi_j.
    \end{equation}
    Define the frame superoperator
    \begin{equation}
        \mathcal{F}=\sum_j\kketbra{\Pi_j}{\Pi_j}\;,
    \end{equation}
    where $\kket{A}:=\sum_{ij}[A]_{ij}\ket{i}\ket{j}$. $\mathcal{F}$ has the same spectrum as the Gram matrix $[g]_{ij}=\bbraket{\Pi_i}{\Pi_j}=\tr\Pi_i\Pi_j$. To see this, form a projector out of the state $\sum_i\kket{\Pi_i}\ket{i}$ where $\ket{i}$ is an orthonormal basis in $\mathcal{H}_{d^2}$ and perform partial traces over each subsystem. The results are $g^{\rm T}$ and $\mathcal{F}$, and so, by the Schmidt theorem, the spectra of $\mathcal{F}$ and $g$ are equal:
    $\lambda(g)=\lambda(\mathcal{F})=\lambda$.

    The expectation value of any operator
    with respect to an arbitrary normalized state is
    less than or equal to its maximal eigenvalue. Thus, a lower bound on the maximal eigenvalue $\lambda_1$ of $\mathcal{F}$ is given by
    \begin{equation}\label{evalbound}
        \lambda_1\geq\frac{1}{d}\bbra{I}\mathcal{F}\kket{I}=\frac{1}{d}\sum_j(\tr\Pi_j)^2\geq d.
    \end{equation}
    As our basis is normalized, $\tr g=d^2$, so $\sum_i\lambda_i=d^2$. With this constraint and our bound on the maximal eigenvalue, we have
    \begin{equation}
        \begin{split}
        \lambda&\succ\left(\lambda_1,\frac{d^2-\lambda_1}{d^2-1},\ldots,\frac{d^2-\lambda_1}{d^2-1}\right)\\
        &\succ\left(d,\frac{d}{d+1},\ldots,\frac{d}{d+1}\right)=\lambda_{\rm SIC}.
    \end{split}
    \end{equation}
    The second majorization becomes an equality when $\lambda_1=d$. From Eq.~\eqref{evalbound}, we can see that all $\Pi_j$ must be rank-$1$ for this condition to be satistfied. Furthermore, we see that in this case $\frac{1}{\sqrt{d}}\kket{I}$ is an eigenvector of $\mathcal{F}$ which achieves the maximal eigenvalue $d$. When both majorizations are equalities the spectrum $\lambda_\text{SIC}$ tells us that $\mathcal{F}$ takes the form of a weighted sum of a projector and the
    identity superoperator $\mathbf{I}$, specifically
    \begin{equation}
        \mathcal{F}=\frac{d}{d+1}\left(\mathbf{I}+\kketbra{I}{I}\right).
\end{equation}
By Cor.~1 in \cite{Appleby:2015}, this implies the $\Pi_j$ form a SIC.
\end{proof}
As in the Lemma, denote by $\lambda$ the spectrum of $g$ sorted in nonincreasing order. $\tr g=d^2$, so
\begin{equation}
    \sum_{i>1}\lambda_i=d^2-\lambda_1\;.
\end{equation}
Then because the arithmetic mean is greater than or equal to the geometric mean with equality iff the elements are all equal, we have
\begin{equation}
    \frac{1}{d^2-1}\sum_{i>1}\lambda_i=\frac{d^2-\lambda_1}{d^2-1}\geq\left(\prod_{i>1}\lambda_i\right)^{\!\frac{1}{d^2-1}}\;,
\end{equation}
which implies
\begin{equation}
    \det g\leq \lambda_1\left(\frac{d^2-\lambda_1}{d^2-1}\right)^{\!d^2-1}\;,
\end{equation}
with equality iff $\lambda_2=\cdots=\lambda_d^2=\frac{d^2-\lambda_1}{d^2-1}$. When $\lambda_1=d$, we then have
\begin{equation}
    \det g =\frac{d^{d^2}}{(d+1)^{d^2-1}}=\det g_{\rm SIC}\;,
\end{equation}
with equality iff $\lambda=\lambda_{\rm SIC}$. By Lemma \ref{zhu}, we have equality iff the post-measurement states form a SIC.
\label{appendix:theorem3}
\section{Appendix B: Proof of Theorem \ref{vol}}
Equation \eqref{rhoinbasis} expanded instead in the $\rho_i$ basis allows us to relate the differential elements of operator space and probability space for any MIC basis:
    \begin{equation}
        \text{d}\sigma=\sum_{i,j}[\Phi]_{ij}\rho_i\text{d}p^j\;.
    \end{equation}
The Hilbert--Schmidt line element is then
    \begin{equation}\label{messyline}
        \text{d}s^2_\text{HS}=\tr (\text{d}\sigma)^2=\sum_{ijkl}[\Phi]_{ij}[\Phi]_{kl}(\tr\rho_i\rho_k)\text{d}p^j\text{d}p^l\;.
    \end{equation}
    As in the proof of Lemma \ref{det}, we write $\Phi=AG^{-1}$ where $[G]_{ij}=\tr H_iH_j$ is the Gram matrix for the MIC and $[A]_{ij}=h_i\delta_{ij}$. Note further that $\tr\rho_i\rho_j=\left[A^{-1}GA^{-1}\right]_{ij}$. Then Eq.~\eqref{messyline} simplifies to
    \begin{equation}
        \text{d}s^2_\text{HS}=\sum_{ij}\left[G^{-1}\right]_{ij}\text{d}p^i\text{d}p^j\;.
    \end{equation}
    The Hilbert--Schmidt volume element on the space of Hermitian operators in $\mathcal{L}(\mathcal{H}_d)$ may now be related to the Euclidean volume element in $\mathbb{R}^{d^2}$,
    \begin{equation}
        \text{d}\Omega_\text{HS}=\sqrt{|\det G^{-1}|}\text{d}V_\text{E}\;,
    \end{equation}
    or, equivalently,
    \begin{equation}
        \text{d}V_\text{E}=\sqrt{|\det G|}\text{d}\Omega_\text{HS}\;.
    \end{equation}
    The larger $\det G$, the larger the corresponding Euclidean volume. Recall Eq.~\eqref{maxdet} which says
    \begin{equation}
        \det G\leq \det G_{\rm SIC}\;,
    \end{equation}
    with equality iff the MIC is a SIC. Thus, for any region in operator space, the Euclidean volume is maximal with respect to the SIC basis. In particular, the SIC basis gives the largest volume among positive semidefinite operators $A$ satisfying $1-\epsilon\leq\tr A\leq1+\epsilon$ for any $\epsilon>0$. As $\epsilon\to0$, we obtain quantum state space $\mathcal{Q}_d$ and the corresponding region in $\mathbb{R}^{d^2}$ will have the largest hyperarea within
$\Delta$ when computed with the SIC basis.

To calculate this hyperarea, we need to find the metric on $\Delta$ induced by the Hilbert--Schmidt metric in the SIC basis. We may parameterize $\Delta$ by
\begin{equation}
X=\left(p^1,\ldots,p^{d^2-1},1-\sum_{i=1}^{d^2-1}p^i\right)\;,
\end{equation}
which has partial derivatives $\partial_iX^\mu=\delta_i^\mu-\delta_{d^2}^\mu$ where the Latin index runs from $1$ to $d^2-1$ and the Greek index runs from $1$ to $d^2$. For any MIC, the induced metric $g$ is given by
\begin{equation}
    \left[g\right]_{ij}=\sum_{\mu,\nu=1}^{d^2}\partial_iX^\mu\partial_jX^\nu\left[G^{-1}\right]_{\mu\nu}\;.
\end{equation}
It is easily seen that $G^{-1}_\text{SIC}=d(d+1)I-J$ where $J$ is the Hadamard identity. One may then calculate
$g_\text{SIC}=d(d+1)(I+J)$
and $\det g_\text{SIC}=d^2(d^2+d)^{d^2-1}$. The induced volume element on $\Delta$ is then
\begin{equation}
    \text{d}\omega_\text{HS}=d\sqrt{(d^2+d)^{d^2-1}}\text{d}p^1\cdots \text{d}p^{d^2-1}\;.
\end{equation}
In a similar way, it may be checked that the Euclidean metric in $\mathbb{R}^{d^2}$ induces a volume element $\text{d}\mathcal{A}_\text{E}$ on $\Delta$ satisfying
\begin{equation}
    \frac{1}{d}\text{d}\mathcal{A}_\text{E}=\text{d}p^1\cdots \text{d}p^{d^2-1}\;,
\end{equation}
and so
\begin{equation}
    \text{d}\omega_\text{HS}=\sqrt{(d^2+d)^{d^2-1}}\text{d}\mathcal{A}_\text{E}\;.
\end{equation}
We may now integrate over quantum state space to obtain
\begin{equation}
    \text{vol}_\text{HS}(\mathcal{Q}_d)=\sqrt{(d^2+d)^{d^2-1}}\text{vol}_\text{E}(\mathcal{P}_\text{SIC})\;.
\end{equation}
\.Zyczkowski and Sommers \cite{Zyczkowski:2003} calculate the Hilbert--Schmidt volume of finite-dimensional quantum state space to be
  \begin{equation}
      \textnormal{vol}_\textnormal{HS}(\mathcal{Q}_d)=\sqrt{d}(2\pi)^{d(d-1)/2}\frac{\Gamma(1)\cdots\Gamma(d)}{\Gamma(d^2)}\;,
  \end{equation}
  from which Eq.~\eqref{volPsic} follows.
\end{document}